\long\def\@makecaption#1#2{%
  \vskip\abovecaptionskip
   \sbox\@tempboxa{#1:#2}%
  \ifdim \wd\@tempboxa >\hsize
   {\bfseries #1:} #2\par
  \else
    \global \@minipagefalse
   \hb@xt@\hsize{\box\@tempboxa\hfil}%
  \fi
  \vskip\belowcaptionskip}
\renewcommand{\paragraph}{\roman{paragraph}}
\newcommand{\Z}{\mathbb{Z}}
\newcommand{\F}{\mathbb{F}}
\newtheorem{thm}{\scshape   Theorem}[section]
\newtheorem{lem}[thm]{\scshape   Lemma}
\newtheorem{coro}[thm]{\scshape  Corollary}
\newtheorem{prop}[thm]{\scshape  Proposition}
\newtheorem{exa}[thm]{\scshape   Example}
\newtheorem{rem}[thm]{\scshape   Remark}
\begin{document}

\title{\bf On the structure of $1$-generator quasi-polycyclic codes over finite chain rings
\thanks{This research is supported by the National Natural Science Foundation of China (12071001, 61672036),
the Excellent Youth Foundation of Natural Science Foundation of Anhui Province (1808085J20), and the
Academic Fund for Outstanding Talents in Universities (gxbjZD03).
}
}
\author{
Rongsheng Wu\thanks{Key Laboratory of Intelligent Computing and Signal Processing, Ministry of Education, School of Mathematical Sciences,
Anhui University, Hefei, 230601, P.R. China, {\tt wrs2510@163.com}},
Minjia Shi\thanks{Corresponding author, Key Laboratory of Intelligent Computing and Signal Processing, Ministry of Education, School of Mathematical Sciences,
Anhui University, Hefei, 230601, P.R. China, {\tt smjwcl.good@163.com}},
and
Patrick Sol\'{e}\thanks{lab I2M (CNRS, Aix-Marseille University, Centrale Marseille), Marseilles, France, {\tt sole@enst.fr}}
}
\date{}
\maketitle
{\bf Abstract:} {Quasi-polycyclic (QP for short) codes over a finite chain ring $R$ are a generalization of quasi-cyclic codes, and
these codes can be viewed as an $R[x]$-submodule of $\mathcal{R}_m^{\ell}$, where $\mathcal{R}_m:= R[x]/\langle f\rangle$,
and $f$ is a monic polynomial of degree $m$ over $R$. If $f$ factors uniquely into monic and coprime basic irreducibles, then their algebraic structure allow us to characterize the generator polynomials and the minimal generating sets of 1-generator QP codes as $R$-modules. In addition, we also determine the parity check polynomials for these codes by using
the strong Gr\"{o}bner bases. In particular, via Magma system, some quaternary codes with new parameters are derived from these 1-generator QP codes.
}

{\bf Keywords:} Polycyclic codes; Sequential codes; 1-generator quasi-polycyclic codes; duality; $\Z_4$-linear codes

\emph{MSC 2010:}   94B05, 94B60, 11T71

\section{Introduction}

Codes over finite rings have attracted vital attention in the past several decades. In 1994s, Hammons \emph{et al.} \cite{HKCSS94} found that some
good nonlinear binary codes can be viewed as the Gray images of cyclic codes over $\mathbb{Z}_4$. After this seminal paper, codes over rings received lots of interest. It is worth noting that many follow-up works are related to finite chain rings.

Let $R$ be a finite commutative chain ring with maximal ideal $M$, and $\ell$ be a nonnegative integer.
We say that a linear code $C$ of length $n=m\ell$, with index $\ell$ and associate vector $(c_0,c_1,\ldots,c_{m-1})$, over $R$ is quasi-polycyclic if
it is held invariant under right multiplication by the matrix $\bar{D}={\rm diag}(\underbrace{D,D,\ldots,D}_\ell)$ (see \cite{ADLS16,LPS09}
for more details), where $c_0\in R\backslash M$ and
$D$ is the $m\times m$ matrix given by
$$D=\left(
  \begin{array}{ccccc}
    0 & 1 & 0 & \cdots & 0 \\
    0 & 0 & 1 & \cdots & 0 \\
    \vdots & \vdots & \vdots & \ddots & \vdots \\
    0 & 0 & 0 & \cdots & 1 \\
    c_0 & c_1 & c_2 & \cdots & c_{m-1} \\
  \end{array}
\right).$$

Denote $\mathcal{R}_m := R[x]/\langle f\rangle$, where $f=x^m-c(x)$ and $c(x)=c_0+c_1x+\cdots+c_{m-1}x^{m-1}\in R[x]$.
Using polynomial representation, it is easy to check that a QP code $C$ of length $n=m\ell$ with index $\ell$ over $R$ can be viewed as
an $R[x]$-submodule of $\mathcal{R}_m^{\ell}.$ Some variants of QP codes, are as follows:
\begin{itemize}
   \item When $f=x^n-1$, the code $C$ is called quasi-cyclic (QC), which is an important class of linear codes and enjoy a strong algebraic structure \cite{Cao13, CS93, GSF15, GS14, LS03(a)}. Furthermore, self-dual QC codes meet a modified GV bound \cite{LS03}, i.e., these codes are asymptotically good.
   \item It is well-known that $C$ is a polycyclic code if  $\ell=1.$ The algebraic structures of polycyclic codes over finite chain rings with a new duality were presented in \cite{ADLS16, FMB19}, and it is worthy to note that polycyclic codes over Galois rings and mixed alphabet $\Z_2\Z_4$ were studied in \cite{LOOS13} and \cite{WS21}, respectively;
   \item When $\ell=2,$ a $1$-generator QP code is the well-known double polycirculant (DP) code. The DP codes include special class
   of double circulant (DC) codes and double negacirculant codes \cite{B80,SQS18,SZ20}. In \cite{SSX18}, the isodual DP codes were investigated, and these codes were shown to be asymptotically good based on the existence of infinitely many irreducible trinomials over $\F_2$.
\end{itemize}

As we know, QP codes include DP codes as a subclass. If the length is not large enough, their parameters are optimal or
quasi-optimal amongst formally self-dual codes.
More importantly, in the binary case, it was shown that these codes are asymptotically good.
In fact, their relative distance satisfies the GV bound of rate one-half. Some DC codes of short lengths over finite fields are optimal \cite{Gui00}, and the Gray images of some special DC codes over finite rings satisfy a modified GV bound
 \cite{AOS18, HSS19, SHSS19(a)}.
The aim of this paper is to generalize the algebraic structure of quasi-cyclic codes to the QP codes, and the main tool used here
is the theory of strong Gr\"{o}bner basis.

Recently, some good linear (cyclic) codes over $\Z_4$ were obtained from constacyclic codes or generalized QC codes over different rings \cite{BBA17,GGWF17,OUA16,SQS17,WGF15}, and these codes have better parameters than the best know $\Z_4$-linear codes \cite{AA15}.
Moreover, in \cite{CCDL19}, new cyclic self-dual $\Z_4$-codes are obtained by employing a new representation for each cyclic code over $\Z_4$ of length $2n$.
In this paper, we also give some concrete examples of 1-generator QP codes over $\Z_4$ which lead to quaternary linear codes with new parameters.

The remainder of this paper is organized as follows. Section \ref{Sec:2} introduces the relevant terminology and notation. In Section \ref{Sec:3}, we describe the structural
properties of $1$-generator QP codes over finite chain rings, including their minimal generating set. The duals of QP
codes over the finite chain ring $R$ are described in Section \ref{Sec:4}. Section 5 is devoted to the construction of the new $\Z_4$-linear codes
from 1-generator QP codes with index two and three. Section 6 concludes the paper, along with some interesting problems.

\section{Preliminaries}\label{Sec:2}
\subsection{Finite chain rings}

It is well known that a finite commutative ring $R$ with identity $1\neq 0$ is called a finite chain ring if its ideals are linearly ordered by inclusion.
We start by introducing the properties for a finite chain ring. Readers interested should consult \cite{NS01}.

\begin{prop}
Let $R$ be a finite chain ring. Let $\gamma$ be a fixed generator of $M$ and $s$ the nilpotency index index of $\gamma$, that is, the smallest positive integer for which $\gamma^s=0$. Then
\begin{enumerate}
  \item [\rm (i)] the distinct proper ideals of $R$ are $\langle\gamma^t\rangle$, where $\langle\gamma^t\rangle=\gamma^tR$ for $1\leq t\leq s-1$;
  \item [\rm (ii)] for any element $e\in R\backslash\{0\}$, there is a unique $i$ and a unit $u\in R$ such that $e=u\gamma^i$ where $0\leq i\leq s-1$ and $u$ is unique modulo $\gamma^{s-i}$;
  \item [\rm (iii)] ${\rm Ann}_R(\gamma^i)=\langle\gamma^{s-i}\rangle$, where the annihilator ${\rm Ann}_R(a)$ of an element $a$ in $R$ is defined as ${\rm Ann}_R(a)=\{x\in R: xa=0\}$.
\end{enumerate}
\end{prop}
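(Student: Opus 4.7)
The plan is to prove the three parts in a slightly permuted order, (ii$'$)--(iii)--(ii$''$)--(i), where (ii$'$) and (ii$''$) are the existence and uniqueness halves of (ii). The reason for the reshuffling is that the uniqueness statement ``$u$ is unique modulo $\gamma^{s-i}$'' is essentially a restatement of (iii), so one should first establish (iii) and then feed it back into (ii).

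First I would prove the existence half of (ii) by iteratively ``pulling out factors of $\gamma$.'' If $e$ is a unit, take $i=0$. Otherwise $e\in M=\langle\gamma\rangle$, so $e=\gamma r_1$. If $r_1$ is a unit we stop at $i=1$; if not, $r_1\in M$ and we continue, producing $e=\gamma^k r_k$ at step $k$. Since $\gamma^s=0$ and $e\ne 0$, the process cannot proceed beyond step $s-1$ without $r_k$ becoming a unit, so we obtain $e=u\gamma^i$ for some unit $u$ and some $0\le i\le s-1$. Next I would establish (iii). The inclusion $\langle\gamma^{s-i}\rangle\subseteq\mathrm{Ann}_R(\gamma^i)$ is immediate because $\gamma^s=0$. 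For the reverse inclusion, take $x\in\mathrm{Ann}_R(\gamma^i)$ with $x\ne 0$ and apply the existence part of (ii) to write $x=v\gamma^j$ with $v$ a unit. Then $v\gamma^{i+j}=0$ forces $\gamma^{i+j}=0$, hence $i+j\ge s$, i.e.\ $j\ge s-i$, so $x\in\langle\gamma^{s-i}\rangle$.

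With (iii) in hand, the uniqueness claims in (ii) follow cleanly. Suppose $u\gamma^i=v\gamma^j$ with $i\le j$. Then $\gamma^i(u-v\gamma^{j-i})=0$. If $j>i$, then $v\gamma^{j-i}\in M$, so $u-v\gamma^{j-i}$ is a unit (unit minus non-unit), which forces $\gamma^i=0$ and contradicts $i\le s-1$. Hence $i=j$, and then $(u-v)\gamma^i=0$ gives $u-v\in\mathrm{Ann}_R(\gamma^i)=\langle\gamma^{s-i}\rangle$ by (iii), which is the desired congruence.

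Finally I would deduce (i). Let $I$ be a proper nonzero ideal, and let $i_0$ be the minimum of the valuations $i$ arising from the normal forms $e=u\gamma^i$ of nonzero elements $e\in I$. Choosing $e_0=u_0\gamma^{i_0}\in I$, we get $\gamma^{i_0}=u_0^{-1}e_0\in I$, so $\langle\gamma^{i_0}\rangle\subseteq I$; conversely every element of $I$ has valuation $\ge i_0$ and thus lies in $\langle\gamma^{i_0}\rangle$. Distinctness of the ideals $\langle\gamma^t\rangle$ for $1\le t\le s-1$ follows from $\gamma^{t-1}\notin\langle\gamma^t\rangle$: otherwise $\gamma^{t-1}(1-r\gamma)=0$ for some $r$, and since $1-r\gamma$ is a unit we would get $\gamma^{t-1}=0$, contradicting $t-1<s$. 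I expect the only subtle point to be keeping the logical order straight so that the uniqueness in (ii) is not invoked before (iii) is proved; the rest is a direct unwinding of the chain-ring axioms together with the nilpotency of $\gamma$.
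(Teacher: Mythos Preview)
Your argument is correct and well-organized; the reshuffling so that the annihilator computation (iii) precedes the uniqueness half of (ii) is exactly the right logical order, and each step is justified using only the local structure of $R$ and the nilpotency of $\gamma$. One small point you leave implicit in (i) is that $i_0\geq 1$ because $I$ is proper (if $i_0=0$ then $I$ contains a unit), but this is harmless.

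As for comparison with the paper: the paper does not actually prove this proposition. It is stated as background material with the remark ``Readers interested should consult \cite{NS01},'' so there is no in-paper argument to compare against. Your write-up therefore supplies what the paper omits, and it does so along the standard lines one finds in the cited reference.
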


Let $F$ be the residue field of $R$, and the projection $R\rightarrow F$ extends naturally to a projection $R[x]\rightarrow F[x]$. Then for any $g(x)\in R[x]$ we denote by $\bar{g}(x)$ its image under this projection. In the following, a polynomial $g(x)\in \mathcal{R}_m$ or $R[x]$ will be denoted simply by $g$. It will simplify the notation to avoid writing out this each time.

\subsection{Polycyclic codes}

A polynomial over the field $F$ has no multiple irreducible factors is called square-free. Recall that a polynomial $g\in R[x]$ is called basic irreducible if $\bar{g}$ is irreducible over $F$.
From \cite[Lemma 3.1]{S06}, it is not difficult to see that the ring $R[x]/\langle f\rangle$ is principal if $\bar{f}$ is square-free. In this case, $f$ factors uniquely into monic and coprime basic irreducibles, this assumption we use throughout this paper.

To help characterize polycyclic codes over $R$, it is useful to introduce the content of the strong Gr\"{o}bner basis over a principal ideal ring according to
\cite[Theorem 4.2]{NS03} and the description below \cite[Theorem 4.3]{NS03}.

\begin{thm}\label{grobner}
Let $C \subseteq R[x]/\langle f\rangle$ be a nonzero polycyclic code. Then $C$ admits a set of generator polynomials $G$:
$$G=\{\gamma^{j_0}g_0,\gamma^{j_1}g_1,\ldots,\gamma^{j_t}g_t\},$$
where $0\leq t \leq s-1$ and
\begin{enumerate}
  \item [\rm (i)] $0\leq j_0<j_1<\ldots<j_t\leq s-1$;
  \item [\rm (ii)] $g_i$ is monic for $i=0,1,\ldots,t$;
  \item [\rm (iii)] $m>\deg(g_0)>\deg(g_1)>\ldots>\deg(g_t)$;
  \item [\rm (iv)] $g_t|g_{t-1}|\ldots|g_0|f$.
\end{enumerate}
\end{thm}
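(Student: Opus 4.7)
The plan is to derive the statement by lifting $C$ to an ideal of $R[x]$ and then applying the strong Gr\"{o}bner basis machinery of \cite{NS03} cited above. Let $\pi:R[x]\to \mathcal{R}_m$ denote the canonical projection and set $\widetilde{C}:=\pi^{-1}(C)$; then $\widetilde{C}$ is an ideal of $R[x]$ containing $f$. Since $R$ is a finite chain ring with uniformizer $\gamma$ of nilpotency index $s$, it is a principal ideal ring, so Theorem~4.2 of \cite{NS03} supplies a strong Gr\"{o}bner basis of $\widetilde{C}$ whose elements have leading coefficients that are powers of $\gamma$ up to units. Absorbing these units into the polynomial parts, I can write the basis in the form $\{\gamma^{j_0}g_0,\ldots,\gamma^{j_t}g_t\}$ with each $g_i\in R[x]$ monic.

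Next I would refine this basis so that the $j_i$ strictly increase while the $\deg g_i$ strictly decrease. If two basis elements shared the same degree, the one with the smaller $\gamma$-power would allow elimination of the other's leading term through a unit multiple; if two shared the same exponent $j_i$, an appropriate monomial multiplier would cancel the higher-degree leading term. After relabeling one obtains $0\le j_0<j_1<\cdots<j_t\le s-1$ and $\deg g_0>\deg g_1>\cdots>\deg g_t$, which are conditions (i)--(iii). The bound $\deg g_0<m$ follows by reducing modulo $f=x^m-c(x)$, so all representatives may be kept in degree $<m$.

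I would then verify $g_0\mid f$ in $R[x]$: since $f\in\widetilde{C}$ is monic, the strong Gr\"{o}bner basis property forces $j_0=0$ and guarantees that iterated polynomial division of $f$ by $g_0$ terminates at $0$, yielding $f=q\,g_0$ for some monic $q\in R[x]$.

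The main obstacle is the full divisibility chain $g_t\mid g_{t-1}\mid\cdots\mid g_0$. To tackle it I would introduce, for each $0\le i\le t$, the ``$i$-th torsion code'' $T_i:=\{\bar a\in F[x]/\langle\bar f\rangle : \gamma^{j_i}a\in C\text{ for some lift }a\in\mathcal{R}_m\}$. Each $T_i$ is an ideal of $F[x]/\langle\bar f\rangle$, which is a product of fields (because $\bar f$ is square-free), so $T_i=\langle\bar h_i\rangle$ for a unique monic divisor $\bar h_i$ of $\bar f$. The inclusions $T_0\subseteq T_1\subseteq\cdots$ translate into $\bar h_t\mid\cdots\mid\bar h_0\mid\bar f$, and tracing the Gr\"{o}bner basis construction identifies $\bar h_i=\bar g_i$. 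Finally, because each monic divisor of $\bar f$ lifts uniquely to a monic divisor of $f$ (unique factorization into coprime basic irreducibles, as noted before the theorem), this divisibility chain passes from $F[x]$ back to $R[x]$, yielding $g_t\mid\cdots\mid g_0\mid f$ and completing~(iv).
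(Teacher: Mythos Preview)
The paper does not give its own proof of this theorem; it is quoted directly from \cite[Theorem~4.2]{NS03} together with the discussion following \cite[Theorem~4.3]{NS03}, so there is no in-paper argument to compare against. Your proposal is thus an attempt to reconstruct that cited proof, and the overall strategy (lift to $R[x]$, take a minimal strong Gr\"{o}bner basis, then use torsion codes and the square-free hypothesis for the divisibility chain) is indeed the right one.

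There is, however, a genuine gap. Your Step~5 claim that ``iterated polynomial division of $f$ by $g_0$ terminates at $0$'' is false in general: the strong Gr\"{o}bner reduction of $f$ may involve several basis elements, and the remainder of $f$ upon ordinary division by the monic $g_0$ can be a nonzero $\gamma$-multiple. Concretely, take $R=\Z_4$, $f=x^3-1$ (so $\bar f=(x+1)(x^2+x+1)$ is square-free over $\F_2$), and $C=\langle x+1\rangle\subseteq\mathcal{R}_3$. Then $\widetilde{C}=\langle x+1,\,2\rangle$ and $\{x+1,\,2\}$ is a minimal strong Gr\"{o}bner basis satisfying (i)--(iii); yet $x+1\nmid x^3-1$ in $\Z_4[x]$ (the remainder is $2$), so (iv) fails for \emph{this} basis. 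The same example exposes the weak point in Step~6: you correctly obtain $\bar h_0=\overline{x+1}=\bar g_0$, but the unique monic divisor of $f$ lifting $\bar h_0$ is $x-1$, not $x+1$, so lifting the divisibility chain back to $R[x]$ produces polynomials different from your original $g_i$. What is missing is the argument that one may \emph{replace} each $g_i$ by the Hensel lift $h_i$ of $\bar g_i$ (the monic divisor of $f$ with $\bar h_i=\bar g_i$) and still have a generating set for $C$; that replacement step is precisely where the square-free hypothesis does its work, and it is how \cite{NS03} (and \cite{NS00} in the cyclic case) actually secure condition~(iv). As Remark~\ref{principal} already warns, the strong Gr\"{o}bner basis is not unique, so one must exhibit a particular choice with the divisibility property rather than verify (iv) for an arbitrary one.
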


\begin{rem}\label{principal}
We follow the notation of Theorem \ref{grobner}, and the set $G$ is called a generating set in standard form for the polycyclic code $C$.
Then, the proof of \cite[Corollary to Theorem 6]{CS95} implies that $C=\langle\gamma^{j_0}g_0+\gamma^{j_1}g_1+\cdots+\gamma^{j_t}g_t\rangle$.

It should be noted that a strong Gr\"{o}bner basis of $C$ described in Theorem \ref{grobner} is not necessarily unique \cite[Theorem 7.5]{NS01}. However, the cardinality of the basis, the degree of their polynomials and the exponents $j_i$ for $0 \leq i\leq t$ are unique, and $C$ has $|F|^\Delta$ codewords, where $\Delta=\sum\limits_{i=0}^t(s-j_i)(\deg(g_{i-1})-\deg(g_i))$, $\deg(g_{-1})=m$, see \cite{FMB19} for more details.
\end{rem}

\begin{rem}
In the light of \cite[Theorem 4.4]{NS00}, let $C$ be a cyclic code of length $N$ with a unique generating set in standard form (see \cite[Definition 4.1]{NS00}) as $G=\{\gamma^{j_0}g_0,\gamma^{j_1} g_1,\ldots ,\gamma^{j_t}g_t\}$ over $R$. Consider the shortened cyclic code $C'$ of length $n$ obtained from $C$. Then
$$x^n=g_0q+r, \ \ \ {\rm where}\  \deg(r)<\deg(g_0).$$
Let $f=x^n-r$, and it is clear that $g_t|g_{t-1}|\cdots |g_0|f$. Note that the code $C'$ with generating set $G$ is an ideal of the quotient ring $R[x]/\langle f \rangle$, i.e., a polycyclic code of length $n$ over $R$. In other words, every shortened cyclic code over the finite chain ring is polycyclic.
\end{rem}

Let $C \subseteq R[x]/\langle f\rangle$ be a nonzero polycyclic code. Define
$${\rm Ann}_{\mathcal{R}_m}(C)=\{g\in R[x]/\langle f\rangle:gc=0\ {\rm for\ all} \ c\in C\}.$$

It is clear that ${\rm Ann}_{\mathcal{R}_m}(C)$ is a $R[x]$-submodule of $\mathcal{R}_m$. We consider now the problem of determining the standard generating set for ${\rm Ann}_{\mathcal{R}_m}(C)$.

\begin{prop}\label{ann}
Let $C=\langle g\rangle$ be a polycyclic code with the generating set $G=\{\gamma^{j_0}g_0,\gamma^{j_1}g_1,$ $\ldots, \gamma^{j_t}g_t\}$ over $R$, where $g=\gamma^{j_0}g_0+\gamma^{j_1}g_1+\cdots+\gamma^{j_t}g_t$. Then the set $H=\{\gamma^{b_0}h_0,\gamma^{b_1}h_1,\ldots,$ $\gamma^{b_{t+1}}h_{t+1}\}$ forms a strong Gr\"{o}bner basis of ${\rm Ann}_{\mathcal{R}_m}(C)$, where
\begin{enumerate}
  \item [\rm (i)] $b_i=s-j_{t+1-i}$ for all $i\in \{0,1,\ldots,t+1\}$. In particular, put $j_{t+1}=s$;
  \item [\rm (ii)] $h_i$ is monic with $h_i=f/g_{t-i}$ for $i\in\{0,1,\ldots,t+1\}$. In particular, put $g_{-1}=f$.
\end{enumerate}
\end{prop}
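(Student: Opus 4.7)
The strategy is to verify that the set $H$ fulfills the standard-form conditions of Theorem \ref{grobner}, lies inside $\mathrm{Ann}_{\mathcal{R}_m}(C)$, and has exactly the correct cardinality; the equality $\langle H\rangle=\mathrm{Ann}_{\mathcal{R}_m}(C)$ then follows from the Frobenius property of $\mathcal{R}_m$. First I would check the standard-form conditions: the strict chain $j_0<j_1<\cdots<j_t<j_{t+1}=s$ reverses to $0=b_0<b_1<\cdots<b_{t+1}=s-j_0$; each $h_i=f/g_{t-i}$ is monic as a quotient of monics; the divisibility $g_t\mid g_{t-1}\mid\cdots\mid g_0\mid f$ transposes via complementary cofactors into $h_{t+1}\mid h_t\mid\cdots\mid h_0\mid f$; and $\deg(h_i)=m-\deg(g_{t-i})$ is strictly decreasing from a value at most $m-1$ down to $\deg(h_{t+1})=0$.

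For the inclusion $\langle H\rangle\subseteq \mathrm{Ann}_{\mathcal{R}_m}(C)$, Remark \ref{principal} gives $C=\langle g\rangle$ with $g=\sum_k \gamma^{j_k}g_k$, so it suffices to check each product $\gamma^{b_i}h_i\cdot \gamma^{j_k}g_k\equiv 0 \pmod f$. If $k\leq t-i$ then $g_{t-i}\mid g_k$, so $(f/g_{t-i})g_k$ is a genuine polynomial multiple of $f$ and vanishes modulo $f$. If $k>t-i$ then $k\geq t+1-i$, forcing $j_k\geq j_{t+1-i}$, hence $b_i+j_k\geq s$ and the $\gamma$-exponent annihilates the product.

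The heart of the argument is a cardinality match. Set $d_k=\deg(g_k)$ with $d_{-1}=m$, and $e_i=\deg(h_i)$ with $e_{-1}=m$. Remark \ref{principal} yields $|C|=|F|^{\Delta}$ where $\Delta=\sum_{k=0}^t(s-j_k)(d_{k-1}-d_k)$, and $|\langle H\rangle|=|F|^{\Delta'}$ where $\Delta'=\sum_{i=0}^{t+1}(s-b_i)(e_{i-1}-e_i)$. Substituting $s-b_i=j_{t+1-i}$ and reindexing by $k=t+1-i$ rewrites $\Delta'=s\,d_t+\sum_{k=0}^t j_k(d_{k-1}-d_k)$, and so
\[
\Delta+\Delta'=\sum_{k=0}^t s(d_{k-1}-d_k)+s\,d_t=s\,d_{-1}=sm=\log_{|F|}|\mathcal{R}_m|
\]
by telescoping. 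Since $\bar f$ is square-free, Hensel's lemma and the Chinese remainder theorem decompose $\mathcal{R}_m$ into a finite product of Galois extensions of the chain ring $R$, so $\mathcal{R}_m$ is Frobenius and every ideal $I$ satisfies $|I|\cdot|\mathrm{Ann}_{\mathcal{R}_m}(I)|=|\mathcal{R}_m|$. Applied to $C$, this forces $|\langle H\rangle|=|\mathrm{Ann}_{\mathcal{R}_m}(C)|$, and together with the inclusion already in hand this gives the equality claimed. The chief obstacle is the reindexing bookkeeping in the telescoping, plus attention to the degenerate cases $j_0=0$ and $g_t=1$, in which $\gamma^{b_{t+1}}h_{t+1}$ respectively $\gamma^{b_0}h_0$ become zero in $\mathcal{R}_m$ and may be discarded from $H$ without affecting the counting or the conclusion.
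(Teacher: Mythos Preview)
Your proposal is correct and follows essentially the same two-step pattern as the paper: first verify $\langle H\rangle\subseteq\mathrm{Ann}_{\mathcal{R}_m}(C)$ by the identical case split on whether $b_i+j_k\geq s$ or $g_{t-i}\mid g_k$, and then conclude equality by a cardinality match. The one difference is in how the target size $|\mathrm{Ann}_{\mathcal{R}_m}(C)|=|\mathcal{R}_m|/|C|$ is obtained: the paper gets it directly from the first isomorphism theorem applied to the multiplication-by-$g$ map $\theta:\mathcal{R}_m\to\langle g\rangle$, $a\mapsto ag$, whose kernel is exactly the annihilator, whereas you invoke the CRT decomposition and the Frobenius property of $\mathcal{R}_m$. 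Both are valid, but the paper's route is more elementary and avoids the structural detour; on the other hand, your write-up is more thorough in explicitly checking that $H$ meets the standard-form conditions of Theorem~\ref{grobner}, which the paper leaves implicit.
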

\begin{proof}
Consider an $\mathcal{R}_m$-module homomorphism $\theta$ from $\mathcal{R}_m$ onto $\langle g\rangle$ by:
$$a\longmapsto ag, \ {\rm for \ all}\ a\in \mathcal{R}_m.$$

For $0\leq k\leq t$ and $0\leq r\leq t+1$, if $j_k+b_r\geq s$, we have $\gamma^{j_k+b_r}g_kh_r=0$ since $\gamma^s=0$; if $j_k+b_r<s$, i.e.,
$j_k\leq j_{t+1-r}-1$ from condition (i), then we have $k\leq t-r$, so $g_{t-r}|g_k$. Thus, we have $$\gamma^{j_k}g_k\gamma^{b_r}h_r=\gamma^{j_k+b_r}g_k\frac{f}{g_{t-r}}=0,$$
which implies that $\langle h\rangle\subseteq \ker(\theta)$. On the other hand, by the first isomorphism theorem we have
$\mathcal{R}_m/\ker(\theta)\cong \langle g\rangle$. Then by a simple calculation from Remark \ref{principal} yields $|\langle h\rangle|=|\ker(\theta)|$.
Consequently, $\langle h\rangle=\ker(\theta)={\rm Ann}_{\mathcal{R}_m}(C)$.
\end{proof}

\subsection{Quasi-sequential codes}

Define linear functions $\sigma_i:R^m\rightarrow R$ for $1\leq i \leq \ell$. Let $C$ be a linear code of length $m\ell$ over $R$. Then $C$ is called quasi-sequential if for any $(\textbf{c}_1,\textbf{c}_2,\ldots,\textbf{c}_\ell)\in C$, we have
$(\textbf{c}'_1,\textbf{c}'_2,\ldots,\textbf{c}'_\ell)\in C$, where $\textbf{c}_i=(c_{i,0},c_{i,1}\ldots,c_{i,m-1})\in R^m$, and
$\textbf{c}'_i=(c_{i,1},\ldots,c_{i,m-1},\sigma_i(\textbf{c}_i))$ with $1\leq i \leq \ell$.

Let $C$ be a quasi-sequential code with associate vector $(\textbf{w}_1,\textbf{w}_2,\ldots,\textbf{w}_{\ell})$, where $\textbf{w}_i=(w_{i,0},w_{i,1},\ldots,w_{i,m-1})$ with $w_{i,0}\in R\backslash M$ and $1\leq i \leq \ell$. Similar to the argument in \cite{LPS09}, the code $C$ is invariant under right multiplication by the matrix $M={\rm diag}(M_1,M_2,\ldots,M_\ell)$, and for $1\leq i \leq \ell$, the $m\times m$ matrix $M_i$ is of the form:
$$M_i=\left(
  \begin{array}{ccccc}
    0 & 0 &  \cdots & 0 & w_{i,0} \\
    1 & 0 &  \cdots & 0 & w_{i,1} \\
     0 & 1 &  \cdots & 0 & w_{i,2} \\
    \vdots & \vdots &  \ddots & \vdots & \vdots \\
    0 & 0 &  \cdots & 1 & w_{i,m-1} \\
  \end{array}
\right).$$

It is easy to see that quasi-sequential codes are a natural generalization of quasi-cyclic codes and even of quasi-twisted codes. In fact, sequential codes, quasi-sequential codes with $\ell=1$, were first introduced by Hou \emph{et al.} \cite{HLP09} in 2009s, these codes are interesting for code constructions: they can be viewed as a source in obtaining good (even optimal) codes.

\section{Algebraic structure of $1$-generator QP codes}\label{Sec:3}

An $r$-generator QP code is a $R[x]$-submodule with $r$ generators. In this section, we only consider 1-generator QP codes over $R$.
Assume that $C$ is a $1$-generator QP code generated by $(F_1,F_2,\ldots,F_{\ell})\in \mathcal{R}_m^\ell$, then $C$ is of the form:
$$C=\{(gF_1,gF_2,\ldots,gF_{\ell}):g\in \mathcal{R}_m\}.$$
Then we define
$${\rm Ann}_{\mathcal{R}_m}(C)=\{g\in \mathcal{R}_m:gF_i=0\ {\rm for\ all} \ 1\leq i\leq \ell\}.$$
It is also clear that ${\rm Ann}_{\mathcal{R}_m}(C)$ is a $R[x]$-submodule of $\mathcal{R}_m$. According to Theorem \ref{grobner}, there exists a strong Gr\"{o}bner basis $\{\gamma^{k_0}h'_0,\gamma^{k_1}h'_1,\ldots,\gamma^{k_{r+1}}h'_{r+1}\}$ such that ${\rm Ann}_{\mathcal{R}_m}(C)=\langle h'\rangle$, where $h'=\gamma^{k_0}h'_0+\gamma^{k_1}h'_1+\cdots+\gamma^{k_{r+1}}h'_{r+1}$. We call $h'$ a parity check polynomial in standard form of the 1-generator QP code $C$.

For $1\leq i\leq \ell$, the projection of $\mathcal{R}_m^\ell$ on $\mathcal{R}_m$ according to its $i$-th component is the mapping $\phi_i:  \mathcal{R}_m^{\ell}\longrightarrow \mathcal{R}_m$ by $(a_1,a_2,\ldots,a_{\ell}) \longmapsto a_{i}$. Then we have the following proposition.

\begin{prop}\label{projection}
Let $C$ be a 1-generator QP code with the generator $\mathcal{G}=(F_1,F_2,\ldots,F_{\ell})\in \mathcal{R}_m^{\ell}$. Then $F_i\in C_i=\phi_i(C)$, where $C_i$ is a polycyclic code of length $m$ over $R$. Furthermore, ${\rm Ann}_{\mathcal{R}_m}(C)$ is a $R[x]$-submodule of ${\rm Ann}_{\mathcal{R}_m}(C_i)$ for $1\leq i\leq \ell$.
\end{prop}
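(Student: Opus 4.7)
The plan is to handle the three assertions in order, each of which reduces to a routine verification once one unpacks the definitions.

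First, to show $F_i\in C_i$, I would specialize the definition $C=\{(gF_1,\ldots,gF_\ell):g\in\mathcal{R}_m\}$ to the case $g=1$. This gives $\mathcal{G}=(F_1,\ldots,F_\ell)\in C$, and applying the coordinate projection $\phi_i$ yields $F_i=\phi_i(\mathcal{G})\in\phi_i(C)=C_i$.

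Next, to show $C_i$ is a polycyclic code of length $m$, I would observe that with the componentwise $R[x]$-action, $\phi_i:\mathcal{R}_m^\ell\to\mathcal{R}_m$ is an $R[x]$-module homomorphism. Since $C$ is by definition an $R[x]$-submodule of $\mathcal{R}_m^\ell$, its image $C_i=\phi_i(C)$ is automatically an $R[x]$-submodule of $\mathcal{R}_m=R[x]/\langle f\rangle$; this is the working definition of a polycyclic code used throughout the paper. Concretely, an element of $C_i$ is of the form $gF_i$ for some $g\in\mathcal{R}_m$, so $C_i=\{gF_i:g\in\mathcal{R}_m\}=\langle F_i\rangle$.

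Finally, for the annihilator inclusion, I would take an arbitrary $h\in{\rm Ann}_{\mathcal{R}_m}(C)$. By the definition recalled just above the proposition, this means $hF_j=0$ in $\mathcal{R}_m$ for every $j\in\{1,\ldots,\ell\}$; in particular $hF_i=0$. Given any element $a\in C_i$, the previous paragraph writes $a=gF_i$ for some $g\in\mathcal{R}_m$, and then
\[
ha=h(gF_i)=g(hF_i)=0,
\]
so $h\in{\rm Ann}_{\mathcal{R}_m}(C_i)$. Since both annihilators are already $R[x]$-submodules of $\mathcal{R}_m$, this inclusion of underlying sets upgrades to the required inclusion of $R[x]$-submodules.

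There is no real obstacle here: the only point worth emphasizing is the identification of a typical element of $C_i$ as $gF_i$, which is what converts the hypothesis $hF_i=0$ into annihilation of all of $C_i$. Everything else is an immediate unraveling of the definitions of $\phi_i$, $C$, and ${\rm Ann}_{\mathcal{R}_m}(\cdot)$.
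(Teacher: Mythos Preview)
Your proposal is correct and follows the same approach as the paper: both arguments recognize that $\phi_i$ is an $R[x]$-module homomorphism (so $C_i=\phi_i(C)$ is an ideal of $\mathcal{R}_m$, i.e., polycyclic) and that the annihilator inclusion is immediate from $hF_i=0$. The paper's proof is terser, merely asserting that ``it is easy to check'' these facts, whereas you have spelled out the details---in particular the useful identification $C_i=\langle F_i\rangle$---but the underlying logic is identical.
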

\begin{proof}
Following the projection map defined above, it is easy to check that the code $\phi_i(C)$ is polycyclic over $R$. Therefore, the remainder of the argument follows from the fact that ${\rm Ann}_{\mathcal{R}_m}(C_i)$ is also a $R[x]$-submodule of $\mathcal{R}_m$.
\end{proof}

Next, we give a characterization of the parity check polynomial by using the strong Gr\"{o}bner basis with respect to the ideal generated by $F_1,F_2,\ldots,F_\ell$.
Suppose that
$\widehat{C}=\langle F_1,F_2,\ldots,F_\ell\rangle$
with the strong Gr\"{o}bner basis $\{\gamma^{a_0}f_0,\gamma^{a_1}f_1,\ldots,\gamma^{a_r}f_r\}$. Then the exact form of $h'$ is shown by the following result.

\begin{prop}\label{projection1}
Keep the above notation. Then $\{\gamma^{k_0}h'_0,\gamma^{k_1}h'_1,\ldots,\gamma^{k_{r+1}}h'_{r+1}\}$ forms a strong Gr\"{o}bner basis for
${\rm Ann}_{\mathcal{R}_m}(C)$, where
\begin{enumerate}
  \item [\rm (i)] $k_i=s-a_{r+1-i}$ for all $i\in \{0,1,\ldots,r+1\}$. In particular, put $a_{r+1}=s$;
  \item [\rm (ii)] $h'_i$ is monic with $h'_i=f/f_{r-i}$ for $i\in \{0,1,\ldots,r+1\}$. In particular, put $f_{-1}=f$.
\end{enumerate}
\end{prop}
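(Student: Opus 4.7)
The plan is to reduce this statement to Proposition \ref{ann} by observing that the annihilator of the $1$-generator QP code $C$ coincides with the annihilator of the polycyclic code $\widehat{C}=\langle F_1,F_2,\ldots,F_\ell\rangle$ inside $\mathcal{R}_m$.

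First I would establish the identity ${\rm Ann}_{\mathcal{R}_m}(C)={\rm Ann}_{\mathcal{R}_m}(\widehat{C})$. One inclusion is immediate: if $g$ annihilates every generator $F_i$, then for any $\mathcal{R}_m$-linear combination $\sum_i u_i F_i\in \widehat{C}$, one has $g\sum_i u_i F_i=\sum_i u_i(gF_i)=0$, so $g\in {\rm Ann}_{\mathcal{R}_m}(\widehat{C})$. The reverse inclusion is trivial because each $F_i$ itself lies in $\widehat{C}$. Note that this matches the spirit of Proposition \ref{projection}, which already records that ${\rm Ann}_{\mathcal{R}_m}(C)\subseteq {\rm Ann}_{\mathcal{R}_m}(C_i)$ for each projection.

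Having identified ${\rm Ann}_{\mathcal{R}_m}(C)$ with the annihilator of the polycyclic code $\widehat{C}$, I would then apply Proposition \ref{ann} directly. Since $\widehat{C}$ admits $\{\gamma^{a_0}f_0,\gamma^{a_1}f_1,\ldots,\gamma^{a_r}f_r\}$ as a strong Gr\"obner basis in standard form, Proposition \ref{ann} supplies a strong Gr\"obner basis for ${\rm Ann}_{\mathcal{R}_m}(\widehat{C})$ of the form $\{\gamma^{b_0}h_0,\gamma^{b_1}h_1,\ldots,\gamma^{b_{r+1}}h_{r+1}\}$, where $b_i=s-a_{r+1-i}$ (with the convention $a_{r+1}=s$) and $h_i=f/f_{r-i}$ (with the convention $f_{-1}=f$). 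Renaming these as $k_i$ and $h'_i$ gives exactly the parameters claimed in the statement.

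The only subtlety worth spelling out is that the strong Gr\"obner basis of $\widehat{C}$ produced by Theorem \ref{grobner} genuinely satisfies the divisibility and degree conditions required to invoke Proposition \ref{ann}; once this is noted, the rest is a direct transcription. Thus the main conceptual content of the proof lies in the opening identification of the two annihilators, after which the result follows mechanically from Proposition \ref{ann}. I do not anticipate any serious obstacle beyond carefully restating the conventions $a_{r+1}=s$ and $f_{-1}=f$ in parallel with those of Proposition \ref{ann}.
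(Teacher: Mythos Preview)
Your proposal is correct and follows exactly the same approach as the paper: the paper's entire proof reads ``The result follows from ${\rm Ann}_{\mathcal{R}_m}(C)={\rm Ann}_{\mathcal{R}_m}(\widehat{C})$ and Proposition \ref{ann},'' which is precisely your reduction. Your version simply spells out the two inclusions establishing that annihilator identity, which the paper leaves implicit.
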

\begin{proof}
The result follows from ${\rm Ann}_{\mathcal{R}_m}(C)={\rm Ann}_{\mathcal{R}_m}(\widehat{C})$ and Proposition \ref{ann}.
\end{proof}

Before the start of studying the minimal generating set of the QP code as $R$-module, we first introduce the following lemma.

\begin{lem}\label{polynomial}
Let $C$ be a 1-generator QP code over $R$ with generator $\mathcal{G}=(F_1,F_2,\ldots,F_{\ell})\in \mathcal{R}_m^{\ell}$. Then $F_i$ can be selected to be of the form $F_i=f_{i,0}+\gamma f_{i,1}+\cdots+\gamma^{s-1}f_{i,s-1}$, where $f_{i,j}$'s are monic or 0 in $R[x]$ with $1\leq i\leq \ell$ and
$0\leq j\leq s-1$.
\end{lem}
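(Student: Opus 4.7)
The plan is to produce the desired decomposition by performing a coefficientwise $\gamma$-adic expansion. Fix a complete set $T \subset R$ of coset representatives for the reduction $R \to R/M = F$ containing $0$; by item (ii) of the proposition on finite chain rings, every $a \in R$ has a unique expression $a = t_0 + \gamma t_1 + \cdots + \gamma^{s-1} t_{s-1}$ with $t_j \in T$, and each nonzero $t_j$ is a unit.

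First I would lift $F_i$ to a polynomial representative $F_i = \sum_{k=0}^{m-1} c_{i,k} x^k \in R[x]$ of degree less than $m$. Substituting the $\gamma$-adic expansion $c_{i,k} = \sum_{j=0}^{s-1} \gamma^j t_{i,k,j}$ of each coefficient and interchanging the order of summation produces
\[
F_i = \sum_{j=0}^{s-1} \gamma^j\, \tilde f_{i,j}, \qquad \tilde f_{i,j}(x) := \sum_{k=0}^{m-1} t_{i,k,j}\, x^k,
\]
where every $\tilde f_{i,j}$ is either $0$ or has all nonzero coefficients in $T\setminus\{0\}$, hence in particular has a unit leading coefficient. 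To upgrade ``unit leading coefficient'' to ``monic'', I would set $f_{i,j} := u_{i,j}^{-1}\tilde f_{i,j}$, where $u_{i,j}$ is the leading coefficient of $\tilde f_{i,j}$, and reabsorb the residual unit factors into higher $\gamma$-layers by induction on $j$; this is legitimate because rescaling at level $j$ only perturbs layers with index $\geq j$, and the process terminates after at most $s$ steps since $\gamma^s = 0$. The word ``selected'' in the statement refers to exactly this rewriting, and it is supported by the freedom to replace the generator $\mathcal{G}$ by $u\mathcal{G}$ for a unit $u \in \mathcal{R}_m$, which does not alter the code $C$.

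The main obstacle is this last normalization: one must verify that the inductive reabsorption of unit factors into the higher $\gamma$-layers is consistent and that no new terms of degree $\geq m$ are introduced during the rewriting. In the case $|F| = 2$, which is the setting relevant to the $\mathbb{Z}_4$-applications in Section~5, the obstacle vanishes altogether: the natural choice $T = \{0,1\}$ forces every nonzero $\tilde f_{i,j}$ to be monic already, so the $\gamma$-adic expansion alone yields the stated form without any further adjustment.
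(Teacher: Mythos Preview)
Your $\gamma$-adic expansion in steps~1--2 is correct and more elementary than the paper's route, which passes through Proposition~\ref{projection} and Theorem~\ref{grobner} to express each $F_i$ as a multiple of a standard-form generator of the polycyclic code $C_i=\phi_i(C)$.  So the overall strategy is genuinely different.

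The gap is your step~3.  Your claim that ``rescaling at level $j$ only perturbs layers with index $\geq j$'' is false in general: replacing $\tilde f_{i,0}$ by $u_{i,0}^{-1}\tilde f_{i,0}$ forces a correction term $(1-u_{i,0}^{-1})\tilde f_{i,0}$, and since $u_{i,0}\in T\setminus\{0\}$ need not reduce to $1$ in $F$, this correction has a nonzero $\gamma^0$-component whenever $|F|>2$.  Thus the ``reabsorption into higher $\gamma$-layers'' does not terminate.  Your fallback, replacing $\mathcal G$ by $u\mathcal G$ for a unit $u\in\mathcal R_m$, cannot rescue this: a single unit must act on all $\ell$ components and all $s$ layers simultaneously, so it cannot independently normalise each $\tilde f_{i,j}$.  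You correctly observe that the obstacle disappears when $|F|=2$, but the lemma is stated for arbitrary finite chain rings.

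There is, however, a cheap repair that stays within your framework: the equality $F_i=\sum_j\gamma^j f_{i,j}$ is only required in $\mathcal R_m=R[x]/\langle f\rangle$, not in $R[x]$.  Hence whenever $\tilde f_{i,j}\neq 0$ you may replace it by $\tilde f_{i,j}+f$, which is monic of degree $m$ and represents the same element modulo $f$; this costs nothing and needs no induction.  The paper's proof, by contrast, leans on the monic polynomials $g_{i,k}$ supplied by the strong Gr\"obner basis of $C_i$, though it is itself rather terse about how monicity of the resulting $f_{i,j}$ is actually secured.
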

\begin{proof}
From Proposition \ref{projection}, it is easy to check that $F_i\in C_i$, where $C_i$ is a polycyclic code over $R$. By Theorem \ref{grobner}, there exist
polynomials $g_{i,0},g_{i,1},...,g_{i,t}\in R[x]$ such that
$$C_i=\langle \gamma^{j_{i,0}}g_{i,0}+\gamma^{j_{i,1}}g_{i,1}+\cdots +\gamma^{j_{i,t}}g_{i,t} \rangle.$$
Since $F_i\in C_i$, there exists a polynomial $f_i\in R[x]$ such that
$$F_i=f_i(\gamma^{j_{i,0}}g_{i,0}+\gamma^{j_{i,1}}g_{i,1}+\cdots +\gamma^{j_{i,t}}g_{i,t})=f_{i,0}+\gamma f_{i,1}+\cdots+\gamma^{s-1}f_{i,s-1},$$
where $f_{i,j}$'s are monic polynomials in $R[x]$ for all $1\leq i\leq \ell$ and $0\leq j\leq s-1$.
In particular, let $f_{i,j}=0$ (or $f$) if there is no term of $\gamma^j$ in $F_i$.
\end{proof}

\begin{exa}
Let $R=\Z_8,\ f=(x+3)(x^2+x+1)(x^4+2x^2+3x+1)$, and
\begin{eqnarray*}
    F_1 &=& 2x^3\big((x+3)(x^4+2x^2+3x+1)+2(x^4+2x^2+3x+1)+4\big) \\
     &\equiv& 2x^5+2x^4+4x^3+2x^2+2 {\pmod f},
  \end{eqnarray*}
where $f_1=2x^3$, $g_{1,0}=(x+3)(x^4+2x^2+3x+1)$, $g_{1,1}=x^4+2x^2+3x+1$, $g_{1,2}=1$. Then we can choose $f_{1,0}=0,\ f_{1,1}=x^5+x^4+x^2+1, \ f_{1,2}=x^3$ such that
$$F_1=f_{1,0}+2f_{1,1}+4f_{1,2},$$
where $f_{1,j}$ are monic or 0 with $0\leq j \leq 2$.
\end{exa}

For each $1\leq j\leq s-1$, let
$$h^{(j)}=\frac{f}{\gcd\big(f_{1,j}\prod\limits_{e=0}^{j-1}h^{(e)},f_{2,j}\prod\limits_{e=0}^{j-1}h^{(e)},\ldots,f_{\ell,j}\prod\limits_{e=0}^{j-1}h^{(e)},
f\big)},$$
where $\deg(h^{(j)})=r_j$. In particular, if $j=0$, let
$$h^{(0)}=\frac{f}{\gcd(f_{1,0},f_{2,0},\ldots,f_{\ell,0},f)},$$
where $\deg(h^{(0)})=r_0$. It should be noted that the expression of $f_{i,j}$ is not necessary unique, but the degree of $h^{(j)}$ is unique.
Then the following theorem gives the minimal generating set for a 1-generator QP code $C$ as an $R$-module.

\begin{thm}\label{genset}
Keep the above notation. Suppose that $\sum\limits_{e=0}^{s-1}\gamma^{e}f_{i,e}$ is not a zero divisor for $1\leq i\leq \ell$. Let
\begin{eqnarray*}
  G_0 &=& \{\mathcal{G},x\mathcal{G}, \ldots, x^{r_0-1}\mathcal{G}\},\\
  B_i &=& \big(\sum\limits_{e=i}^{s-1}\gamma^{e}f_{j,e}\prod\limits_{r=0}^{i-1}h^{(r)}\big)_{j=1}^{\ell},\ \ 1 \leq i\leq s-1, \\
  S_i &=& \{B_i, xB_i, \ldots,x^{r_i-1}B_i\},\ \  1 \leq i\leq s-1.
\end{eqnarray*}
Then $G_0\bigcup\limits _{i=1}^{s-1}S_i$ forms a minimal generating set for $C$ as an $R$-module. In addition, the 1-generator QP code $C$ has $|F|^{\sum\limits_{i=0}^{s-1}(s-i)r_i}$ codewords.
\end{thm}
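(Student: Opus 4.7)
The plan is to establish two things: that $G_0\cup\bigcup_{i=1}^{s-1}S_i$ generates $C$ as an $R$-module, and that every element of $C$ admits a sufficiently unique representation in terms of this set to yield the cardinality $|F|^{\sum_{i=0}^{s-1}(s-i)r_i}$, which in turn justifies calling the set minimal.

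For generation, the key identity is the recurrence $h^{(i)}B_i=B_{i+1}$ for $0\le i\le s-1$ with the conventions $B_0=\mathcal{G}$ and $B_s=0$. By the definition of $h^{(i)}$, each product $h^{(i)}\cdot f_{j,i}\prod_{r=0}^{i-1}h^{(r)}$ is a multiple of $f$ and therefore vanishes in $\mathcal{R}_m$; this kills the $\gamma^i$-term in every component of $h^{(i)}B_i$ and leaves precisely $B_{i+1}$. Given an arbitrary $c=g\mathcal{G}\in C$, Euclidean division $g=q_0h^{(0)}+r_0$ with $\deg r_0<r_0$ produces $c=q_0B_1+r_0\mathcal{G}$, where $r_0\mathcal{G}$ is already in the $R$-span of $G_0$. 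Iterating on $q_0B_1$ with $h^{(1)}$, and then on each successive quotient with $h^{(2)},\ldots,h^{(s-1)}$, the process terminates since $B_s=0$ and produces $c=r_0\mathcal{G}+\sum_{i=1}^{s-1}r_iB_i$ with $\deg r_i<r_i$ at each level.

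For the cardinality, observe that $B_i$ carries an overall factor $\gamma^i$, so $\gamma^{s-i}B_i=0$ and the coefficients of $r_i$ may be reduced modulo $\gamma^{s-i}$; this gives the upper bound $|C|\le|R|^{r_0}\prod_{i=1}^{s-1}|R/\langle\gamma^{s-i}\rangle|^{r_i}=|F|^{\sum_{i=0}^{s-1}(s-i)r_i}$. The matching lower bound is a uniqueness assertion, which I would prove by $\gamma$-adic induction: at the $\gamma^0$-level only $r_0\mathcal{G}$ survives, forcing $\bar r_0\bar f_{j,0}\equiv 0\pmod{\bar f}$ for every $j$; the lcm-of-quotients identity built into the gcd defining $\bar h^{(0)}$ then yields $\bar h^{(0)}\mid\bar r_0$, and the degree bound $\deg r_0<r_0$ forces $\bar r_0=0$. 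Writing $r_0=\gamma r'_0$ and subtracting transfers the analysis to the $\gamma^1$-level, where the same argument applied to the combination $r'_0f_{j,0}+r_1h^{(0)}f_{j,1}$ now invokes the gcd defining $h^{(1)}$, and so on up the filtration. The hypothesis that each $\sum_e\gamma^ef_{i,e}$ is a non-zero-divisor is what keeps these lifts well-defined and prevents hidden torsion from collapsing the count. The main obstacle I anticipate is precisely this inductive step: the accumulated factor $\prod_{r<i}h^{(r)}$ built into the definition of $h^{(i)}$ is tailor-made to repackage the combined contributions at the $\gamma^i$-level into a single divisibility $\bar h^{(i)}\mid(\cdot)$, but carrying out the bookkeeping simultaneously across all $\ell$ components while keeping within the degree bounds requires careful management.
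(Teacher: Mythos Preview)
Your generation argument is essentially the paper's: both iterate the division algorithm by $h^{(0)},h^{(1)},\ldots$ to peel off pieces lying in the spans of $G_0,S_1,\ldots,S_{s-1}$. Your recurrence $h^{(i)}B_i=B_{i+1}$ is a cleaner way of packaging exactly what the paper does when it writes $k=p_1h^{(0)}+q_1$ and continues.

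For the minimality/cardinality half the two arguments diverge. The paper does \emph{not} run a $\gamma$-adic uniqueness induction. Instead it shows the spans are pairwise disjoint by a direct cancellation: if $Q$ lies in both $\mathrm{Span}(G_0)$ and $\mathrm{Span}(S_1)$, then from the two expressions $Q_i=F_i\,u$ and $Q_i=F_i\,h^{(0)}v$ (using $h^{(0)}f_{i,0}\equiv 0$) one gets $F_i(u-h^{(0)}v)=0$, and the non-zero-divisor hypothesis cancels $F_i$ to give $u=h^{(0)}v$; a degree/valuation argument then forces $u=v=0$. This is where the hypothesis is really spent, and it is spent once, not layer by layer.

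Your route is not wrong, but you are working harder than necessary and not exploiting the hypothesis for what it actually gives. In the finite ring $\mathcal{R}_m$ every non-zero-divisor is a unit, so the assumption makes each $F_i$ invertible in $\mathcal{R}_m$; in particular $\bar f_{i,0}$ is coprime to $\bar f$, which collapses your $\gamma^0$-step immediately and trivializes the higher levels you were worried about. The ``main obstacle'' you anticipate at the inductive step simply does not arise under this hypothesis---it is precisely what the hypothesis removes. Conversely, the paper's cancellation argument would fail without the hypothesis, whereas your filtration approach, if the bookkeeping could be carried through, might extend to non-free cases; that is the only potential advantage of your more elaborate route.
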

\begin{proof}
Let $\textbf{c}=k\mathcal{G}$ be a codeword in $C$ with $k\in R[x]$. Recall the Division Algorithm, there exist two unique polynomials $p_1$ and $q_1$ such that $k=p_1h^{(0)}+q_1$, where $\deg{(q_1)}<r_0$ or $q_1=0$. Thus we have
\begin{eqnarray*}
  \textbf{c} = k\mathcal{G} &=& (p_1h^{(0)}+q_1)\mathcal{G}\\
   &=& p_1h^{(0)}\big(\sum_{i=0}^{s-1}\gamma^{i}f_{1,i},\ldots,\sum_{i=0}^{s-1}\gamma^{i}f_{\ell,i}\big)+q_1\mathcal{G}.
\end{eqnarray*}
Note that $q_1\mathcal{G}\in {\rm Span}(G_0)$. For $1\leq i \leq s-1$, there exist unique $p_{i+1}$ and $q_{i+1}$ such that $p_i=p_{i+1}h^{(i)}+q_{i+1}$ with $\deg(q_{i+1})<r_i$ from the Division Algorithm. Then we continue the steps above, it is easy to check that $\textbf{c}\in {\rm Span}(G_0)\bigcup _{i=1}^{s-1}{\rm Span}(S_i)$.

Therefore, it remains to show that ${\rm Span}(G_0),{\rm Span}(S_1),\ldots , {\rm and} \ {\rm Span}(S_{s-1})$ are pairwise disjoint. We just give a partial proof here, that is, ${\rm Span}(G_0)\cap {\rm Span}(S_1)=\{\textbf{0}\}$, and a similar argument for the rest of the cases.
Let $Q=(Q_1,Q_2,\ldots,Q_\ell)\in {\rm Span}(G_0)\cap {\rm Span}(S_1)$. Then there are $u,v \in R[x]$ such that
\begin{eqnarray}
  Q_i &=& \sum_{e=0}^{s-1}\gamma^{e}f_{i,e} u,\ \  {\rm where} \ \deg(u)<r_0,\ 1\leq i\leq \ell,\\
  Q_i &=& \sum_{e=1}^{s-1}\gamma^{e}f_{i,e}h^{(0)} v, \ \ {\rm where} \ \deg(v)<r_1,\ 1\leq i\leq \ell.
\end{eqnarray}
Therefore, Equs. (1) and (2) imply that $\sum\limits_{e=0}^{s-1}\gamma^{e}f_{i,e}(u-h^{(0)}v)=0$ due to $h^{(0)}f_{i,0}=0$. Since $\sum\limits_{e=0}^{s-1}\gamma^{e}f_{i,e}$ is not a zero divisor for $1\leq i\leq \ell$, we have $u=h^{(0)}v$.
On the other hand, it is easy to check the coefficients of $u$ belong to $\langle\gamma^{j_1-j_0}\rangle$ since $\gamma^{s-j_1}Q_i=0$ from Equ. (2).
The discussion above implies that $u=v=0$, which completes the proof.
\end{proof}

\section{Duality}\label{Sec:4}

Define the dual of the QP code $C$ of length $m\ell$ over $R$ by
$$C^{\bot}=\{\textbf{x}\in R^{m\ell}: \textbf{x}\cdot\textbf{y}=0\ {\rm for\  all} \ \textbf{y}\in C\}.$$
In general the linear code $C^{\bot}$ is not quasi-polycyclic. Then an alternate duality introduced in \cite{ADLS16, FMB19} on $\mathcal{R}_m$
is necessary. Let $g,h \in \mathcal{R}_m$ with degree less than $m$, define
$$\langle g,h\rangle_f=gh(0).$$
From now on, let $C$ be a QP code of length $n=m\ell$ with index $\ell$ over $R$. Then we define the annihilator dual $C^0$ of $C$ by the formula
$$C^0=\{g\in \mathcal{R}_m^\ell: \langle g,h\rangle_f=\sum_{i=1}^\ell \langle g_i,h_i\rangle_f=0 \ \ {\rm for \  all} \ \ h\in C\},$$
where $g=(g_1,g_2,\ldots,g_\ell)$ and $h=(h_1,h_2,\ldots,h_\ell)\in \mathcal{R}_m^\ell$.

\begin{prop}\label{dual}
Let $C$ be a $R[x]$-submodule of $\mathcal{R}_m^\ell$, that is, a QP code of length $n=m\ell$ with index $\ell$ over $R$. Then $C^0$ is also a $R[x]$-submodule of $\mathcal{R}_m^\ell$.
\end{prop}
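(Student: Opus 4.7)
The plan is to verify that $C^0$ is closed under (i) addition, (ii) scalar multiplication by elements of $R$, and (iii) multiplication by the indeterminate $x$; together these three properties amount to $C^0$ being an $R[x]$-submodule. Properties (i) and (ii) fall out immediately from the $R$-bilinearity of the pairing $\langle \cdot,\cdot\rangle_f=\sum_{i=1}^{\ell} g_i h_i(0)$: if $g,g'\in C^0$ and $r\in R$, then for every $h\in C$ one has $\langle g+g',h\rangle_f=\langle g,h\rangle_f+\langle g',h\rangle_f=0$ and $\langle rg,h\rangle_f=r\langle g,h\rangle_f=0$. These deserve only a line apiece.

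The substance lies in (iii). My approach is to first establish the adjunction identity
$$\langle xg,h\rangle_f=\langle g,xh\rangle_f \quad\text{for all } g,h\in\mathcal{R}_m^{\ell}.$$
Componentwise this reduces to showing that $(xg_i)h_i$ and $g_i(xh_i)$ coincide as elements of $\mathcal{R}_m$, which is immediate from the commutativity and associativity of the ring $\mathcal{R}_m=R[x]/\langle f\rangle$: both products equal $x(g_ih_i)$. Since every element of $\mathcal{R}_m$ has a unique representative of degree less than $m$, these products have the same canonical representative and hence the same constant term, which is exactly the claim.

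With the identity in hand, the closure under $x$-multiplication is a one-liner: for $g\in C^0$ and any $h\in C$, the hypothesis that $C$ is an $R[x]$-submodule gives $xh\in C$, so $\langle xg,h\rangle_f=\langle g,xh\rangle_f=0$, forcing $xg\in C^0$. Combining this with (i) and (ii) and iterating, $p(x)\cdot C^0\subseteq C^0$ for every $p(x)\in R[x]$, which is precisely the assertion that $C^0$ is an $R[x]$-submodule of $\mathcal{R}_m^{\ell}$.

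I anticipate no real obstacle. The only point worth flagging carefully is that $gh(0)$ is well-defined on the quotient $\mathcal{R}_m$ because canonical degree-$<\!m$ representatives are unique; this well-definedness is what underwrites the adjunction identity, and without it the whole duality theory would be ill-posed.
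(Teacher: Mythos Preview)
Your proof is correct. The paper argues more directly: for arbitrary $\lambda\in R[x]$ of degree less than $m$ and $v\in C^0$, $u\in C$, it asserts the identity $\langle u,\lambda v\rangle_f=\lambda(0)\,\langle u,v\rangle_f$ and then uses $\langle u,v\rangle_f=0$ to conclude $\lambda v\in C^0$ in one stroke. You instead isolate the adjunction $\langle xg,h\rangle_f=\langle g,xh\rangle_f$ (immediate from commutativity of $\mathcal{R}_m$), feed $xh\in C$ back into the definition of $C^0$, and then bootstrap from $x$ to general $p(x)\in R[x]$ via additivity and iteration.

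Your route is in fact the sturdier one. The scalar factorization the paper writes down, which amounts to $(\lambda w)(0)=\lambda(0)\,w(0)$ in $\mathcal{R}_m$, is not valid in general once reduction modulo $f$ intervenes (for instance $f=x^2-1$, $\lambda=w=x$ gives $(\lambda w)(0)=1$ but $\lambda(0)w(0)=0$). What \emph{is} always true is the transfer $\langle u,\lambda v\rangle_f=\langle \lambda u,v\rangle_f$, after which one uses $\lambda u\in C$; that is exactly the mechanism behind your adjunction argument. So the two proofs share the same underlying idea---shift the $R[x]$-action across the pairing and exploit that $C$ is already an $R[x]$-submodule---but your version executes it without the questionable intermediate identity.
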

\begin{proof}
Let $\lambda\in R[x]$ with degree less than $m$. Suppose that $u\in C$ and $v\in C^0$. Notice that $\langle u,v\rangle_f=0$, we have
$$\langle u,\lambda v\rangle_f=\lambda(0)\langle u,v\rangle_f=0.$$
This implies that $\lambda v\in C^0$. Thus, $C^0$ is a $R[x]$-submodule of $\mathcal{R}_m^\ell$, i.e., a QP code of length $m\ell$ with index $\ell$ over $R$.
\end{proof}

\begin{rem}
Let $f=x^m-c(x)$ with $c(x)=c_0+c_1x+\cdots +c_{m-1}x^{m-1}\in R[x]$ and $c_0\in R\backslash M$. It is easy to check that the inner product $\langle.,.\rangle_f$ on $\mathcal{R}_m^\ell$ is a nondegenerate bilinear form from a similar argument in \cite[Lemma 3.1]{ADLS16} (see also \cite[Lemma 3]{FMB19}). Moreover, if $\ell=1$, i.e., $C$ is a polycyclic code over $R$, then we have $C^0={\rm Ann}_{\mathcal{R}_m}(C)$.
\end{rem}

However, since the dual of a QP code is not quasi-polycyclic in general, there is a mathematical interest of the connection between the quasi-polycyclic code $C$ and its dual which is quasi-sequential.

\begin{thm}
Let $C$ be a linear code over $R$. Then $C$ is quasi-polycyclic if and only if $C^\bot$ is quasi-sequential with the same associate vector in each component.
\end{thm}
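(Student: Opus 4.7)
The plan is to reduce the statement to a matrix-level identity and then invoke the adjunction between right multiplication by $A$ and right multiplication by $A^{T}$ with respect to the Euclidean inner product: for row vectors $c,c'\in R^{m\ell}$ and any square matrix $A$ of the appropriate size, one has $(cA)\cdot c' = c\cdot(c'A^{T})$.

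First, I would recast both sides of the claim in matrix form. The code $C$ is quasi-polycyclic with associate vector $(c_{0},c_{1},\ldots,c_{m-1})$ if and only if $C\bar{D}\subseteq C$, where $\bar{D}={\rm diag}(D,D,\ldots,D)$ and $D$ is the companion-type matrix of Section~\ref{Sec:2}. On the other hand, $C^{\bot}$ is quasi-sequential with the same associate vector in every block if and only if $C^{\bot}\widetilde{M}\subseteq C^{\bot}$, where $\widetilde{M}={\rm diag}(M_{1},\ldots,M_{\ell})$ and each $M_{i}$ carries the vector $(c_{0},c_{1},\ldots,c_{m-1})$ in its last column as described in Section~\ref{Sec:2}. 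A direct comparison of the explicit forms of $D$ and $M_{i}$ yields the key identity $M_{i}=D^{T}$, so $\widetilde{M}=\bar{D}^{T}$.

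With this identification in place, the forward implication becomes a one-line adjunction. For any $c\in C$ and $c'\in C^{\bot}$, the QP hypothesis gives $c\bar{D}\in C$, so
\[
0=(c\bar{D})\cdot c' = c\cdot(c'\bar{D}^{T}) = c\cdot(c'\widetilde{M}).
\]
Since $c$ was arbitrary in $C$, we conclude $c'\widetilde{M}\in C^{\bot}$, and therefore $C^{\bot}$ is quasi-sequential with the prescribed associate vector in every component.

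For the converse I would run the symmetric computation: assuming $C^{\bot}\bar{D}^{T}\subseteq C^{\bot}$, for any $c\in C$ and $c'\in C^{\bot}$ the vector $c'\bar{D}^{T}$ lies in $C^{\bot}$, and the adjunction gives $(c\bar{D})\cdot c' = c\cdot(c'\bar{D}^{T})=0$; hence $c\bar{D}\in C^{\bot\bot}$. Closing the loop requires the involutive identity $C^{\bot\bot}=C$, which is valid for linear codes over a finite chain ring since such rings are Frobenius. The only nontrivial step is the matrix identity $M_{i}=D^{T}$, essentially a transposition check that reads $D$ along columns rather than rows; the main conceptual point is invoking biduality over a finite chain ring for the reverse implication.
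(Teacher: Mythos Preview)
Your argument is correct and rests on exactly the same observation the paper uses, namely $M_i=D^{T}$ (so $\widetilde{M}=\bar{D}^{T}$). The paper packages the forward direction slightly differently: instead of your element-level adjunction $(cA)\cdot c'=c\cdot(c'A^{T})$, it picks a parity check matrix $H$ for $C$ and computes $C\bar{D}H^{T}=C(H\bar{D}^{T})^{T}=C(HM)^{T}=0$, concluding that the rows of $HM$ lie in $C^{\bot}$ and hence $C^{\bot}M\subseteq C^{\bot}$. This is equivalent to your computation.

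One point worth noting: the paper's written proof only establishes the forward implication and never addresses the converse, despite the ``if and only if'' in the statement. Your proof is therefore more complete, and your use of biduality $C^{\bot\bot}=C$ over the finite chain (hence Frobenius) ring $R$ is exactly what is needed to close the reverse direction.
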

\begin{proof}
Let $H$ be the parity check matrix of $C$, this yields $CH^T=0$. If $C$ is quasi-polycyclic, we can choose $M_i=D^T$ for $1\leq i \leq \ell$, that is, $M=\bar{D}^T$. Then we have $$C\bar{D}H^T=C(H\bar{D}^T)^T=C(HM)^T=0.$$
Therefore, the dual code $C^\bot$ is invariant under right multiplication by the matrix $M$.
\end{proof}

\section{New $\mathbb{Z}_4$-linear codes from $1$-generator QP codes}\label{Sec:5}

In this section, we present some examples of $1$-generator QP codes with indexes 2 and 3 over $\Z_4$ by using the computer algebra system Magma \cite{BCC97}.
These quaternary codes have better parameters than the ones in \cite{AA15}. For this purpose, we limit our search scope to the case of free codes. The
following corollary gives a characterization for a quaternary $1$-generator QP code with index $\ell$ to be free.

\begin{coro}\label{free}
Let $C$ be a $1$-generator QP code with the generator $\mathcal{G}=(f_{1,0},f_{2,0},\ldots,f_{\ell,0})$ over $\Z_4$.
If $f_{i,0}$ is not a zero divisor for all $1\leq i\leq \ell$, then $C$ is a free QP code with $4^r$ codewords, where $r=\deg(h^{(0)})$, and
$$h^{(0)}=\frac{f}{\gcd(f_{1,0},f_{2,0},\ldots,f_{\ell,0},f)}.$$
\end{coro}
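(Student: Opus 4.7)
The plan is to realize the corollary as a direct specialization of Theorem \ref{genset} to the situation where the generator has only the level-zero component, namely $f_{i,j}=0$ for every $j\geq 1$, together with $R=\Z_4$ so that $s=2$ and $|F|=2$. The non--zero-divisor hypothesis of Theorem \ref{genset}, which reads that $\sum_{e=0}^{s-1}\gamma^{e}f_{i,e}$ is not a zero divisor for all $i$, collapses to the assumption that $f_{i,0}$ is not a zero divisor, so that theorem applies verbatim.

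First I would compute the quantities $h^{(j)}$ and $r_j$ under the assumption $f_{i,j}=0$ for $j\geq 1$. For $j=0$ the formula gives exactly $h^{(0)} = f/\gcd(f_{1,0},\ldots,f_{\ell,0},f)$, so $r_0=\deg(h^{(0)})=r$. For $j\geq 1$ every numerator $f_{i,j}\prod_{e<j}h^{(e)}$ is zero, so the $\gcd$ in the denominator reduces to $f$ itself, giving $h^{(j)}=1$ and $r_j=0$. Therefore $S_j=\emptyset$ for $1\leq j\leq s-1$, and the minimal generating set produced by Theorem \ref{genset} is
\[
G_0=\{\mathcal{G},x\mathcal{G},\ldots,x^{r-1}\mathcal{G}\},
\]
while the codeword count becomes $|F|^{\sum_{i=0}^{s-1}(s-i)r_i}=2^{2r_0}=4^{r}$.

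It remains to upgrade ``generated by $r$ elements with $4^r$ codewords'' to ``free of rank $r$.'' The natural $\Z_4$-module surjection $\Z_4^{r}\twoheadrightarrow C$ sending $(a_0,\ldots,a_{r-1})\mapsto \sum_{i=0}^{r-1} a_i\, x^i\mathcal{G}$ is a map between finite $\Z_4$-modules of the same cardinality $4^{r}$, hence an isomorphism. This shows simultaneously that $G_0$ is a $\Z_4$-basis and that $C$ is free of rank $r$.

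I do not expect any real obstacle: the only care needed is the verification that the formulas defining $h^{(j)}$ degenerate correctly for $j\geq 1$ when all higher-level components of $\mathcal{G}$ vanish, so that the hypothesis of Theorem \ref{genset} is actually satisfied and its codeword count reduces to $4^{r}$. The cardinality-equals-surjection-is-iso argument then delivers freeness for free.
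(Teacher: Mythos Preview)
Your argument is correct and follows the same route as the paper: specialize Theorem~\ref{genset} to $R=\Z_4$ with $f_{i,j}=0$ for $j\ge 1$, observe that the higher-level pieces $S_i$ are empty, and read off $|C|=4^{r_0}$ and the generating set $G_0$. Your explicit verification that $h^{(j)}=1$ (hence $r_j=0$) for $j\ge 1$, together with the cardinality argument for freeness, is more careful than the paper's one-line justification; in fact the paper's sentence ``$h^{(0)}=f$'' is at best a slip (it would force $r=m$, contradicting the examples in Table~1), and your computation that it is $h^{(1)}=1$ which makes $B_1=\textbf{0}$ and $S_1=\emptyset$ is the correct statement.
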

\begin{proof}
According to Theorem \ref{genset}, if $f_{i,0}$ is not a zero divisor for all $1\leq i\leq \ell$, then $h^{(0)}=f$ and $B_1=\textbf{0}$. Therefore, the code $C$ is a free quaternary QP code of size $4^r$ with the minimal generating set $\{\mathcal{G},x\mathcal{G}, \ldots, x^{r-1}\mathcal{G}\}$.
\end{proof}

We are now in a position to present some computational examples of new $\Z_4$-linear codes from 1-generator QP codes with indexes two and three in Table 1 and Table 2, respectively.

For convenience, we write the coefficients of the generator polynomials of QP codes in decreasing order. For example, let $10^21^3321$ denote the polynomial
$x^8+x^5+x^4+x^3+3x^2+2x+1$. The linear codes in Tables 1 and 2 all have better parameters than the best known $\Z_4$-linear codes \cite{AA15}, i.e., these codes have larger minimum Lee distance than the ones in \cite{AA15} with given length and code size. In particular, there are some linear codes in the tables that have the same minimum Lee distance with the best known nonlinear codes. For example, let $f=102013$ in Table 1 and $f=12^201^22$ in Table 2, then the linear codes with the parameters $(10,4^2,10)$ and $(18,4^2,18)$ are the same as the best known quaternary nonlinear codes \cite{AA15}.

On the other hand, in Table 1, the symbol $``*"$ denotes the obtained linear codes which have bigger code size than the best known $\Z_4$-linear codes \cite{AA15} with given code length and minimum Lee distance. For example, in Table 1, for $f=12^201^22$ and $1210210213$, the codes of length 12 and 18 with minimum Lee distances both 9, have sizes $4^3$ and $4^6$, respectively, while the codes in \cite{AA15} only have sizes $4^2$ and $4^32$, respectively.

Note that Table 1 and Table 2 contain only very small values of the length $n$. One can continue to construct more interesting linear codes from Theorem \ref{genset} and Corollary \ref{free} even for non-free case using Magma \cite{BCC97}.

\begin{table}
  \centering
  \caption{\newline Some new quaternary linear codes constructed by 1-generator QP codes with \\ {index two.}}
  \begin{tabular*}{\hsize}{@{}@{\extracolsep{\fill}}llll@{}}
  \toprule
  $f$ & $f_{1,0}$ & $f_{2,0}$  & \bf{Parameters}  \\
  \midrule
  $102013$ & $101^23$ & $130102$  & $(10,4^2,10)$  \\
  $102013$ & $12^21$ & $1^232^2$  & $(10,4^3,7)$  \\
  $12^201^22$ & $101^23$ & $13^22^23$  & $(12,4^3,9)^*$  \\
  $102^23231^2$ & $1302102$ & $1^20^212$  & $(16,4^4,11)$  \\
  $1013^2213^21$ & $10323^32$ & $12101^232^2$  & $(18,4^3,14)$  \\
  $1210210213$ & $1^221^23$ & $1^2321$  & $(18,4^6,9)^*$  \\
  $12310^23123231$ & $120^21^2032$ & $1020130102$  & $(24,4^5,16)$  \\
  $1231^20210131^2$ & $1313^2021$ & $1202^2101^2$  & $(24,4^6,14)$ \\
  $1^32030^33121$ & $12^23^22$ & $10^2312^2$  & $(24,4^8,12)$  \\
  $120323131201^2$ & $10^33$ & $13^212^2$  & $(24,4^9,10)$  \\
  $1021^2032^31213$ & $12^23^22$ & $10^2312^2$  & $(26,4^9,11)$  \\
  $102^213^210312^231$ & $1^20^231^223012$ & $13023^21212302$  & $(28,4^4,22)$ \\
  \bottomrule
\end{tabular*}
\end{table}

\begin{table}
  \centering
  \caption{\newline Some new quaternary linear codes constructed by 1-generator QP codes with \\{index three.}}
  \begin{tabular*}{\hsize}{@{}@{\extracolsep{\fill}}lllll@{}}
  \toprule
  $f$ & $f_{1,0}$ & $f_{2,0}$ & $f_{3,0}$  & \bf{Parameters}  \\
  \midrule
  $102013$ & $12^21$ & $13^22$ & $1^232^2$ & $(15,4^3,12)$  \\
  $12^201^22$ & $121312$ & $1^30232$ & $1^2032$ & $(18,4^2,18)$  \\
  $12^201^22$ & $101^23$ & $13^22^23$ & $1323$ & $(18,4^3,15)$  \\
  $12^232^23^31$ & $13132132^2$ & $1^332312$ & $12023^201$ & $(27,4^3,24)$  \\
  $1^2321023^3$ & $121^42^2$ & $103^21^22$ & $12313^20$ & $(27,4^4,20)$  \\
  $10310^21301$ & $1302102$ & $1^20^212$ & $132010$ & $(27,4^5,17)$  \\
  $1^2301^223^21^2$ & $12302302$ & $1012032$ & $1^23^2031$ & $(30,4^5,21)$  \\
  $10201201231$ & $10^2312^2$ & $12^23^22$ & $1^22101$ & $(30,4^6,18)$  \\
  $1^22310102^21$ & $130102$ & $1^2032$ & $1231^2$ & $(30,4^7,16)$  \\
  $1021^23^21^323$ & $12101^232^2$ & $10323^32$ & $123031^2$ & $(33,4^5,22)$  \\
  $123^31^221213$ & $10232102$ & $1203032$ & $1^2031^23$ & $(33,4^6,20)$  \\
  $12^23^31^32^23$ & $10^2312^2$ & $12^23^22$ & $102310$ & $(33,4^7,18)$  \\
  $1231^20210131^2$ & $1313^2021$ & $132120121$ & $1202^2101^2$ & $(36,4^6,24)$   \\
  $1231^230^2101323$ & $13^20^2301^32$ & $1^2302323^212^2$ & $1^3203213^20$ & $(39,4^4,29)$   \\
  $1210^23012023^3$ & $12^23^210^212$ & $10^231^202102$ & $1023132010$ & $(39,4^5,27)$   \\
  \bottomrule
\end{tabular*}
\end{table}

\section{Conclusion}

In this paper, we generalize some previous studies on 1-generator quasi-cyclic codes over the finite fields to finite chain ring.
Using the Gr\"{o}bner basis theory, we determine their minimal generating sets and their parity check polynomials. Moreover, we have constructed
some $\Z_4$-linear codes with parameters better than the best known $\Z_4$-linear codes \cite{AA15}. Unlike the quasi-cyclic codes,
for a positive integer $m$, we can choose different $f\in R[x]$ with $\deg(f)=m$ and $\bar{f}$ square free, not only $f=x^m-1.$
Thus this extra flexibility is a suitable approach for constructing codes with good parameters over finite chain rings by using the Gray map or codes over finite fields.

In the future work, the problem of characterization of the self-annihilator quasi-polycyclic codes over $R$ with the inner product defined in Section 4 is open, and it might be easier to start with the $1$-generator case. Another topic is to see if the results in this paper can be extended to more general cases, for example, the quasi-polycyclic codes with $r \ (\geq2)$ generators.\\

{\bf Acknowledgement:} The authors would like to thank the anonymous referees for their careful checking and helpful suggestions which have improved the manuscript. The authors are also grateful to the Assoc. Prof. Edgar Mart\'{\i}nez-Moro for helpful discussions.

\end{document}